\newcommand{\ignore}[1]{}
\def\mb{\mathbf}
\def\bm{\boldsymbol}
\newcommand{\Rmnum}[1]{\uppercase\expandafter{\romannumeral #1\relax}}
\newcommand{\rmnum}[1]{\lowercase\expandafter{\romannumeral #1\relax}}
\def\mb{\mathbf}
\newcommand*\diff{\mathop{}\!\mathrm{d}}
\mathchardef\mhyphen="2D
\newtheorem{theorem}{Theorem}[section]
\newtheorem{proposition}[theorem]{Proposition}
\begin{document}
\preprint{}

\title{Data-driven learning of the generalized Langevin equation with state-dependent memory}
\author{Pei Ge}
\affiliation{Department of Computational Mathematics, Science \& Engineering, Michigan State University, MI 48824, USA}%
\author{Zhongqiang Zhang}
\affiliation{Department of Mathematical Sciences, Worcester Polytechnic Institute, MA 01609, USA}%
\author{Huan Lei}
\email{leihuan@msu.edu}
\affiliation{Department of Computational Mathematics, Science \& Engineering, Michigan State University, MI 48824, USA}%
\affiliation{Department of Statistics \& Probability, Michigan State University, MI 48824, USA}%


\begin{abstract}
We present a data-driven method to learn stochastic reduced models of complex systems that retain a state-dependent memory beyond the standard generalized Langevin equation (GLE) with a homogeneous kernel. The constructed model naturally encodes the heterogeneous energy dissipation by jointly learning a set of state features and the non-Markovian coupling among the features. Numerical results demonstrate the limitation of the standard GLE and the essential role of the broadly overlooked state-dependency nature in predicting molecule kinetics related to conformation relaxation and transition.
\end{abstract}

\maketitle
\section{Introduction}
Predicting the collective behavior of complex multiscale systems is often centered around projecting the full-dimensional dynamics onto a set of resolved variables. However, an accurate construction of such a reduced model remains a practical challenge for real applications such as molecular modeling. While model reduction frameworks such as the Koopman operator \cite{Koopman315} and the Mori-Zwanzig projection formalism \cite{Mori1965, Zwanzig61} enable us to write down the dynamic equations in terms of the resolved variables, the reduced model generally becomes non-Markovian with a memory term that may further depend on the resolved variables; the direct numerical evaluation involves solving the expensive full-dimensional orthogonal dynamics. In practice, one common approximation is to ignore such state-dependency; the reduced model is simplified as the standard generalized Langevin equation (GLE) \cite{Zwanzigbook} with a memory kernel that only depends on time. Several approaches \cite{lange2006collective, Darve_PNAS_2009, ceriotti2009langevin, baczewski2013numerical, Dav_Voth_JCP_2015, Lei_Li_PNAS_2016,  russo2019deep, Jung_Hanke_JCTC_2017, Lee2019, ma2019coarse,wang2020data, Zhu2020, Klippenstein_Vegt_JCP_2021,  vroylandt2022likelihood, SheZ_JCP_2023, xie2023gle} have been developed to construct the memory kernel such that certain dynamic properties (e.g., the two-point correlations) can be properly reproduced. Despite its broad application, the validity of the standard GLE for real multiscale systems remains less understood \cite{Hanggi_Stochastic_book_1997,klippenstein2021introducing}.

Intuitively, the above model reduction problem is somewhat analogous to hiking on a mountain where the landscape map and the path roughness represent the free energy and the memory term, respectively. In general, 
we should not expect homogeneous path roughness
at the different locations (e.g., the valleys and the ridges), which, conversely, needs to be inferred from the hiking records. Indeed, studies based on full molecular dynamics (MD) simulations \cite{Posch_Balucani_Physica_A_1984, Straub_Berne_JPC_1987, Straub_Berne_JCP_1990, Plotkin_Wolynes_PRL_1998, Luo_Xie_JPCB_2006, Best_Hummer_PRL_2006, Best_Hummer_PNAS_2010, Hinczewski_Netz_JCP_2010, Satija_Makarov_JCP_2017, Morrone_Li_JPCB_2012, daldrop2017external} and sophisticated projection operator construction \cite{Deutch_Oppenheim_JCP_1971, Zwanzig73, Zwanzig_diffusion_JPC_1992,Berezhkovskii_Szabo_JCP_2011,Glatzel_Schilling_EPL_2021,Vroylandt_EPL_2022,Vroylandt_Monmarche_JCP_2022,ayaz2022generalized,Jung_Jung_JCP_2023} show that the extracted memory term can exhibit a pronounced state-dependent nature where the implications for the collective behaviors remain under-explored.
For extensive MD systems, a recent study \cite{Lyu_Lei_PRL_2023} on reduced modeling of polymer melt shows that the heterogeneous inter-molecular energy dissipation (i.e., the memory) can be crucial for transport on the hydrodynamic scale. However, for canonical non-extensive problems such as biomolecule kinetics, 
a quantitative understanding of the effect of state-dependent memory arising from intra-molecular interactions 
remains an open problem. 
Several recent works \cite{Lei_Li_PNAS_2016, Lee2019, Satija_Makarov_JPCB_2019, Grogan_Lei_JCP_2020, Singh_Mondal_JPCB_2021,  Ayaz_Netz_PNAS_2021, vroylandt2022likelihood, Dalton_Netz_PNAS_2023} investigate the non-Markovian implication for transition dynamics based on the standard GLE. 
While elegant semi-analytical studies \cite{Straub_Berne_JCP_1988, Singh_Krishnan_CPL_1990,Carmeli_Nitzan_CPL_1983,Tarjus_Kivelson_Chemical_Physics_1991,Krishnan_Singh_JCP_1992,Voth_JCP_1992,Straus_Voth_JCP_1993,Haynes_Voth_CPL_1993,Haynes_Voth_JCP_1994,  Cossio_Hummer_PNAS_2015} on idealized 1D double-well potential provide theoretical insights into the state-dependent nature, accurately quantifying such effect on molecule transition dynamics relies on accurate construction and efficient simulation of a reduced model beyond the standard GLE.

This work presents a data-driven approach for learning a stochastic reduced model that retains a state-dependent memory for non-extensive systems. Instead of dealing with the orthogonal dynamics \cite{Darve_PNAS_2009, Vroylandt_Monmarche_JCP_2022, Lyu_Lei_PRL_2023}, the training only relies on the trajectory samples
and the full model is unnecessarily known explicitly. 
The main idea is to seek a generalized representation of the memory as the composition of a set of state-dependent features, which encodes the coupling between the resolved and unresolved variables and can be learned using three-point correlation functions. Efficient training is achieved by constructing the encoder functions using a set of sparse bases, whose correlations can be efficiently pre-computed. Coherent noise can be introduced that ensures a consistent invariant distribution. The present method enables us to probe the open problem of the effect of state-dependent memory on molecular kinetics. Numerical results show that the broadly overlooked state-dependency can play a crucial role. In particular, the standard GLE could be insufficient to capture the collective properties such as the transition rate distribution, which, fortunately, can be reproduced by the present model. 

\section{Reduced model construction}
Let $(\mb q, \mb p) \in \mathbb{R}^{2m}$ represent the resolved variables of a high-dimensional Hamiltonian system, where $\mb q$ denotes the coarse-grained (CG) coordinates as a function of the position variables of the full model, and $\mb p$ denotes the CG momenta. Following the Zwanzig's projection formalism \cite{Zwanzigbook}, the reduced dynamics takes the form
\begin{equation}
\begin{split}
\dot{\mb q} &= \mb M^{-1} \mb p, \\
\dot{\mb p} &= -\nabla U(\mb q) - \int_0^t \mb K(\mb q(\tau), t - \tau) \mb v(\tau) \diff \tau +  \bm{\mathcal{R}}_t,
\end{split}
\label{eq:MZ_general}    
\end{equation}
where $\mb M$ is the mass matrix,  $U(\mb q)$ is the free energy, $\mb v := \dot{\mb q}$ is the velocity, $\mb K(\mb q, t)$ is the memory, and $\bm{\mathcal{R}}_t$ is the noise whose covariance function is related to the memory following the second fluctuation-dissipation theorem \cite{Vroylandt_Monmarche_JCP_2022}. Before proceeding to the construction of $\mb K(\mb q, t)$, we note that  $\mb M$ generally depends on $\mb q$. In this study, we focus on the effect of the state-dependent memory; the current choice of $\mb q$ leads to a constant mass matrix (see Refs. \cite{Lee2019,ayaz2022generalized} and Appendix for further discussion). Also, the construction of the free energy $U(\mb q)$ can be nontrivial; several canonical methods based on enhanced sampling \cite{Torrie_Valleau_Umbrella_JCP_1977,Kumar_Kollman_JCC_1992, Darve_Pohorille_JCP_2001, Laio_Parrinello_PNAS_2002} and temperature acceleration \cite{Tuckerman_adiabatic_JCP_2002,Eric_TAMD_CPL_2006, abrams2008efficient, Maragliano_Vanden_JCP_2008} have been developed. Here, we assume $U(\mb q)$ is known \emph{a priori}.

Instead of rigorously constructing $\mb K(\mb q, t)$ from the full model, we ask the question of which forms of $\mb K$ can generate a memory effect. 
One common approach is to embed the memory in a larger Markovian dynamics with a set of auxiliary variables. An essential observation is that the memory term can be generally written as
\begin{equation}
    \mb K(\mb q(\tau), t-\tau) \approx \mathcal{C}^+ \circ \exp\big( (t-\tau) \mathcal{L}_\text{aux} \big) \circ \mathcal{C}^-, 
\end{equation} 
where $\mathcal{L}_\text{aux}$ is the Liouville operator corresponding to the auxiliary dynamics and $\mathcal{C}^\pm$ are channels representing the coupling of the resolved and auxiliary variables. As a special case, if the coupling and the auxiliary dynamics take a linear form, the embedded memory recovers the standard GLE kernel, i.e., $\mb K(\mb q, t) = \mb K (t)$ (e.g., see Refs. \cite{Lei_Li_JCP_2021, SheZ_JCP_2023}). Therefore, to construct the reduced model beyond the standard GLE, the coupling channels need to properly retain certain kinds of state-dependency nature. This motivates us 
to represent $\mathcal{C}^\pm$ by seeking a set of state-dependent features $\phi(\mb q) = \left[\phi_1(\mb q), \cdots, \phi_n(\mb q)\right]$, where $\phi: \mathbb{R}^{m} \to \mathbb{R}^{n\times m}$ 
essentially encode the nonlinear coupling between the resolved and unresolved variables and the detailed form will be specified later. $\exp\left(t\mathcal{L}_\text{aux} \right)$ induces the non-Markovian interactions among the features with a time lag of $t$ characterized by a kernel function, i.e., $\mathcal{C}^{+} \exp\left((t-\tau)\mathcal{L}_\text{aux} \right) \circ \mathcal{C}^{-} =  \phi(\mb q(t))^T \Theta(t-\tau) \phi(\mb q(\tau))$, where $\Theta: \mathbb{R}^{+}\to \mathbb{R}^{n \times n}$ 
and component $\Theta_{ij}(t-\tau)$ represents the dissipation between features $\phi_i(\mb q(t))$ and $\phi_j(\mb q(\tau))$. In the remainder of this work, we use $\phi_t$ to denote $\phi(\mb q(t))$. 

Accordingly, reduced dynamics \eqref{eq:MZ_general} can be modeled by 
\begin{equation}
\begin{split}
\dot{\mb q} &= \mb M^{-1} \mb p, \\
\dot{\mb p} &= -\nabla U(\mb q) - \int_0^t \phi_t^T \Theta(t-\tau) \phi_\tau \mb v(\tau) \diff \tau +  \bm{\mathcal{R}}_t,
\end{split}
\label{eq:MZ_ansatz}    
\end{equation}
where encoders $\left\{\phi_i(\mb q)\right\}_{i=1}^n$ and kernel $\Theta(t)$ need to be determined. As a special case, at the Markovian limit $\Theta(t) \propto \delta(t)$, Eq. \eqref{eq:MZ_ansatz} recovers the Langevin dynamics and the quadratic form $\phi^T\phi$ ensures positive energy dissipation. Also, by choosing $\Theta(t)$ to be diagonal with individual components corresponding to certain frequency modes, Eq. \eqref{eq:MZ_ansatz} reduces to the heat bath model \cite{Zwanzig73} with a nonlinear coupling of bath coordinates. On the other hand, the present model enables an adaptive choice of the number of spatial 
features and a more general form of $\Theta(t)$ with the off-diagonal components capturing the non-Markovian coupling among the features, which turns out to be crucial for reproducing the collective dynamics.

We emphasize that 
Eq. \eqref{eq:MZ_ansatz} should not be viewed as a direct approximation of Zwanzig's projection formalism. Rather, it serves as a reduced model that faithfully retains the state-dependent memory effect. To construct the model, we represent encoders $\left\{\phi_i(\mb q)\right\}_{i=1}^n$ and kernel $\Theta(t)$ in form of
\begin{equation}
\begin{split}
\phi_i(\mb q) &=  \mb H_i^T \psi (\mb q), \\
\Theta(t) &= {\rm e}^{-\alpha t} \sum_{k=0}^{N_\omega} \hat{\Theta}_k \cos(\omega_k t),
\end{split}
\label{eq:phi_theta}
\end{equation}
where $\psi(\mb q) = \left[\psi_1(\mb q), \cdots, \psi_{N_b}(\mb q)\right]$ is a set of sparse bases and $\mb H = \left[\mb H_1^T, \cdots, \mb H_n^T\right]$ are trainable coefficients to be determined. In this work, we choose the piece-wise linear bases such that the correlation between $\phi_i$ and $\phi_j$ can be efficiently evaluated; 
other choices such as localized kernel can be also used. $\Theta(t)$ needs to preserve positive semi-definiteness. Hence, we represent Fourier modes $\hat{\Theta}_k = \Gamma_k\Gamma_k^T$, where $\Gamma_k \in \mathbb{R}^{n\times n}$ is a low-triangular matrix to be determined along with $\alpha \ge 0$. For the present study, the full dynamics is reversible; $\alpha$ approaches $0$ and essentially serves as a regularization parameter. We note that $\Theta(t)$ in the form of Eq. \eqref{eq:phi_theta} can be further generalized by introducing an anti-symmetry part and refer to Appendix for further discussion.  

To learn the reduced model \eqref{eq:MZ_ansatz}, we need to choose appropriate metrics such that the state-dependent non-Markovian nature can be manifested. While auto-correlation functions such as $c_{vv}(t) = \left\langle \mb v(t)\mb v(0)^T\right\rangle$ merely characterize the overall memory effect, a crucial observation is that the correlations conditional with different initial state $\mb q_0$ further depends on the local energy dissipation and therefore naturally encodes the signatures of the heterogeneous memory effect. Accordingly, we right-multiply the second equation of Eq. \eqref{eq:MZ_ansatz} by $\mb v(0)$ and take the conditional expectation on $\mb q_0 = \mb q^{\ast}$, i.e., 
\begin{equation}
\begin{split}
\mb g(t; \mb q^{\ast}) &= \int_0^t \left\langle\phi_t^T \Theta_{t-\tau} \phi_{\tau} \mb v_\tau \mb v_0^T\vert \mb q_0 = \mb q^{\ast}\right\rangle \diff \tau  \\
&= \int_0^t  \left\langle {\rm Tr} \left[\Theta_{t-\tau} \mb H \psi_\tau \mb v_\tau \mb v_0^T \psi_t^T \mb H^T \right] \vert \mb q_0 = \mb q^{\ast} \right\rangle \diff \tau \\    
&= \int_0^t {\rm Tr} \left[ \Theta_{t-\tau} \mb H \mb C_{\psi,\psi}(t, \tau; \mb q^{\ast}) \mb H^T \right] \diff \tau,
\end{split}
\nonumber
\end{equation}
where $\mb g(t; \mb q^{\ast}) := \left\langle [\dot{\mb p}_t + \nabla U(\mb q_t)] \mb v_0^T\vert \mb q_0 = \mb q^{\ast}\right\rangle$ and $\mb C_{\psi,\psi}(t, \tau; \mb q^{\ast}) := \left\langle \psi_\tau \mb v_\tau \mb v_0^T \psi_t^T \vert \mb q_0 = \mb q^{\ast}\right\rangle$ 
is a three-point correlation characterizing the coupling among the bases. Since $\psi(\mb q)$ is sparse, $\psi_\tau \psi_t^T$ can be evaluated with $O(1)$ complexity and hence $\mb C_{\psi,\psi}(t, \tau; \mb q^{\ast})$ can be efficiently pre-computed. Accordingly, we can establish the training of the reduced model in terms of coefficients $\mb H$ for encoders $\phi(\mb q)$ as well as matrices $\left\{\Gamma_k\right\}_{k=1}^{N_\omega}$ and $\alpha$ for kernel $\Theta(t)$ by minimizing the empirical loss
\begin{equation}
\begin{split}
L &= \sum_{l=1}^{N_q}\sum_{k=1}^{N_t} \left\Vert  \widetilde{\mb g} (t_k; \mb q^{(l)}) - \mb g(t_k; \mb q^{(l)})  \right\Vert^2, \\
\widetilde{\mb g}(t_k; \mb q^{(l)}) &= \sum_{j=1}^k {\rm Tr} \left[\Theta(t_k-t_j) \mb H \mb C_{\psi,\psi}(t_k, t_j; \mb q^{(l)}) \mb H^T \right] \delta t,
\end{split}
\nonumber
\end{equation}
where $l$ and $k$ correspond to various initial states and discrete time, respectively. $\delta t$ is the time step of training samples. $\widetilde{\mb g}(\cdot)$ represents the prediction by the reduced model which depends on the trainable variables and the pre-computed correlation $\mb C_{\psi,\psi}$.

To simulate the reduced model \eqref{eq:MZ_ansatz} \eqref{eq:phi_theta} on $t\in[0, ~T]$, we generate coherent noise $\bm{\mathcal{R}}_t = \phi_t^T \widetilde{\mb R}(t)$, where $\widetilde{\mb R}:\mathbb{R}^{+}\to\mathbb{R}^{n}$ is a Gaussian random process. Specifically, we can show that by choosing $\langle \widetilde{\mb R}(t) \widetilde{\mb R}(\tau)^T\rangle = k_BT{\rm e}^{-\alpha (t-\tau)}\Theta(t-\tau)$,  the reduced model retains a consistent equilibrium density, i.e., $\rho_{\rm eq}(\mb q, \mb p) \propto \exp\left\{-\beta \left[U(\mb q) + \frac{1}{2} \mb p^T \mb M^{-1}\mb p\right]\right\}$ (see proof in Appendix). Accordingly, we can generate discrete samples $\{\widetilde{\mb R}(t_i)\}_{i=0}^{N}$ 
by
\begin{equation}
\widetilde{\mb R}(t_i) = \beta^{-1/2} \sum_{k=0}^{2N} \widetilde{\Theta}_k^{1/2} \left[\cos(\omega_k t_i) \xi_k + \sin(\omega_k t_i) \eta_k\right],
\label{eq:random_noise}
\end{equation}
where $\widetilde{\Theta}_k$ are the Fourier (essentially cosine) modes of ${\rm e}^{-\alpha \vert t\vert }\Theta(t)$ on $\left[-T, T\right]$ (see Refs. \cite{berkowitz1983generalized, ogorodnikov1996numerical} and Appendix for the analytical form); $\xi_k$ and $\eta_k$ are independent Gaussian random vectors. 
In practice, $\widetilde{\mb R}(t)$ by Eq. \eqref{eq:random_noise} can be generated using FFT \cite{Coole_Tukey_FFT_1965} and the convolution term $\int_0^t \phi_t^T \Theta(t-\tau) \phi_\tau \mb v(\tau) \diff \tau$ in Eq. \eqref{eq:MZ_ansatz} can be efficiently evaluated using the fast convolution algorithm \cite{Achim_SIAM_2006}, both of which only require $O(N\log N)$ complexity. 

\section{Numerical results}
The present reduced model enables us to systematically investigate the open problem of the effect of state-dependent memory on the collective dynamics of complex systems such as molecule kinetics. 
In this work, we
consider the molecule benzyl bromide in an aqueous environment. The full MD system consists of one benzyl bromide molecule and 2400 water molecules with the periodic boundary condition. The isothermal-isobaric thermostat \cite{Martyna_Klein_JCP_1994} is used to equilibrate the system at 298K and 1 bar and a canonical ensemble with a Nos\'e-Hoover thermostat \cite{Nose_Mol_Phys_1984, Hoover1985} is used for the production stage. The resolved variable $\mb q$ characterizes the interplay between the substituent and the benzene group and is defined as the distance between the bromine atom and the ipso-carbon atom. 

\begin{figure}[htbp]
    \centering
    \includegraphics[width=0.95\linewidth]{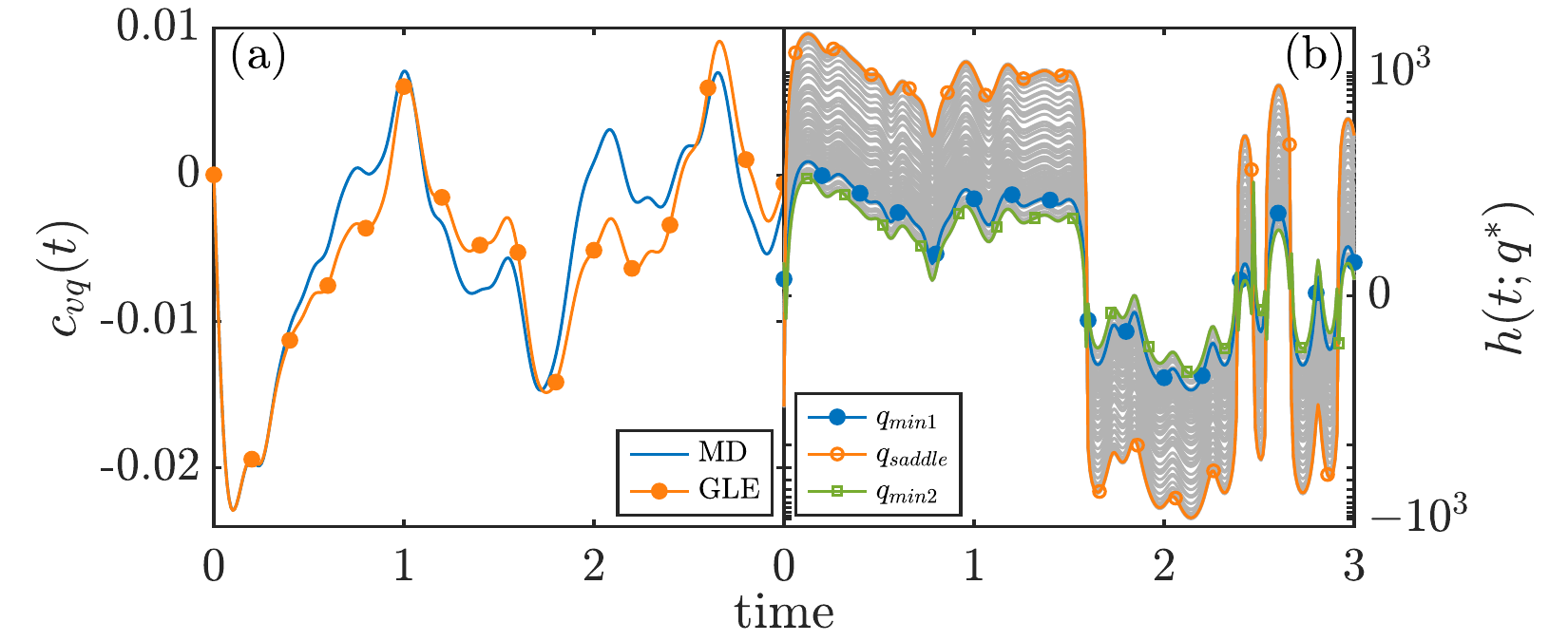}
    \caption{Correlation functions predicted by the standard GLE and the full MD: (a) Overall $c_{vq}(t)$ and (b) conditional 
 $h(t;\mb q^{\ast})$ with $\mb q^{\ast}$ representing various initial states (gray lines), including two local minima and the saddle point (see inset of Fig. \ref{fig:phi_theta}(a)). The large dispersion implies the limitation of the standard GLE, which predicts a single curve for $h(t)$ in short time.} 
\label{fig:c_vq_h}
\end{figure}

Let us start with the standard GLE by setting features $\phi(\mb q) \equiv \mb I$ in Eq. \eqref{eq:MZ_ansatz}. We right-multiply $\mb q(0)$ (or $\mb v(0)$) to Eq. \eqref{eq:MZ_ansatz} and compute the correlation functions, i.e., $h(t) = \int_0^t \Theta(t-\tau) c_{vq}(\tau) \diff \tau$, where $h(t) = \left\langle [\dot{\mb p}_t + \nabla U(\mb q_t)] \mb q_0^T\right\rangle$. The standard GLE kernel $\Theta(t)$ (i.e., $\mb K(t)$ in Eq. \eqref{eq:MZ_general}) can be computed using the Fourier transform of the integral equation.
If the reduced dynamics \eqref{eq:MZ_general} can be simplified as the standard GLE, then $c_{vq}(t)$ should be accurately reproduced.   
Fig. \ref{fig:c_vq_h} shows the prediction of $c_{vq}(t)$ from the standard GLE and the full MD model. The apparent deviations imply non-negligible state-dependency. To further probe this effect, we compute $h(t; \mb q^{\ast})$ conditional with different initial states $\mb q^{\ast}$. Unlike a unified correlation predicted by the standard GLE, the large dispersion reveals the heterogeneous nature of the energy dissipation process.

\begin{figure}[htbp]
    \centering
    \includegraphics[width=0.95\linewidth]{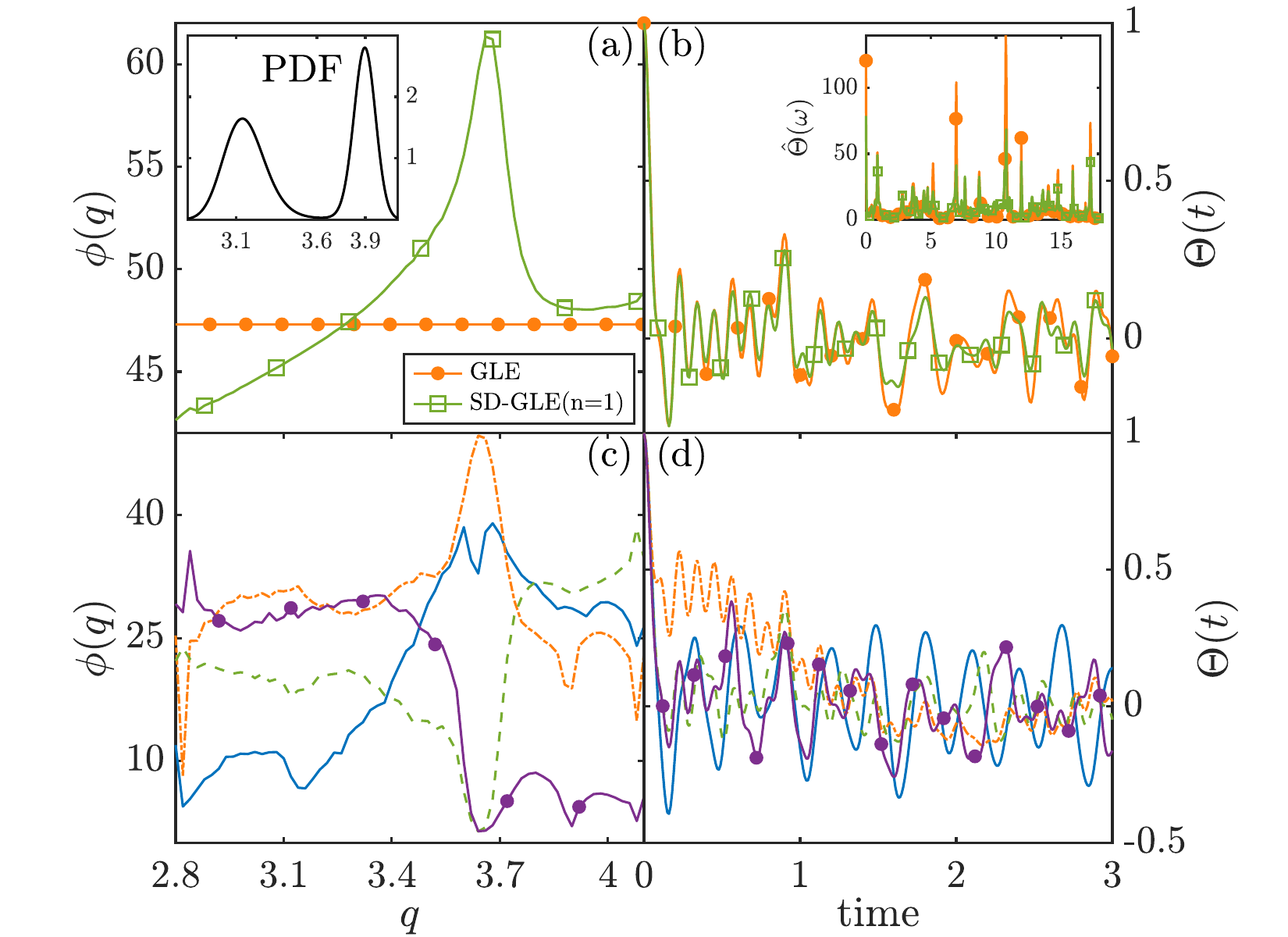}
    \caption{The state features $\phi$ and diagonal components of the matrix-valued kernel $\Theta(t)$ for the present model with state-dependent memory (SD-GLE) trained using (a-b) one feature and (c-d) four features. Inset plots: (a) probability density function (PDF) of $q$, where $\phi(q)$ near the saddle point shows a pronounced peak; (b) Fourier modes of $\Theta(t)$ for the present model and standard GLE. }
    \label{fig:phi_theta}
\end{figure}

To capture the state-dependent memory, we train the present model \eqref{eq:MZ_ansatz} with a different number of features. Fig. \ref{fig:phi_theta}(a-b) shows the obtained encoder $\phi(\cdot)$ using one feature and $\Theta(t)$ is scaled such that $\Theta(0) = 1$. We can see that $\phi$ exhibits apparent deviation from a uniform distribution. In particular, it shows a peak value near the saddle point $q = 3.65$, implying a larger effective friction near the regime. This result supports a similar assumption in earlier semi-analytical studies (e.g., see Ref. \cite{Straus_Voth_JCP_1993}) on improving Kramers' rate theory \cite{Kramers_1940}. 
Also, it explains the short-time dispersion shown in Fig. \ref{fig:c_vq_h}, where $h(t; \mb q^{\ast})$ with $\mb q^{\ast}$ near the saddle point decays faster than the ones near the local minima.  Furthermore, we can train the reduced model using multiple features. Fig. \ref{fig:phi_theta}(c-d) shows the obtained encoders $\{\phi_i(\cdot)\}_{i=1}^n$ with $n=4$ and the diagonal components of $\Theta(t)$. Compared with the case of $n=1$, the larger variation of $\phi_i$ enables a better representation of the state-dependent memory.

\begin{figure}[htbp]
    \centering
    \includegraphics[width=0.95\linewidth]{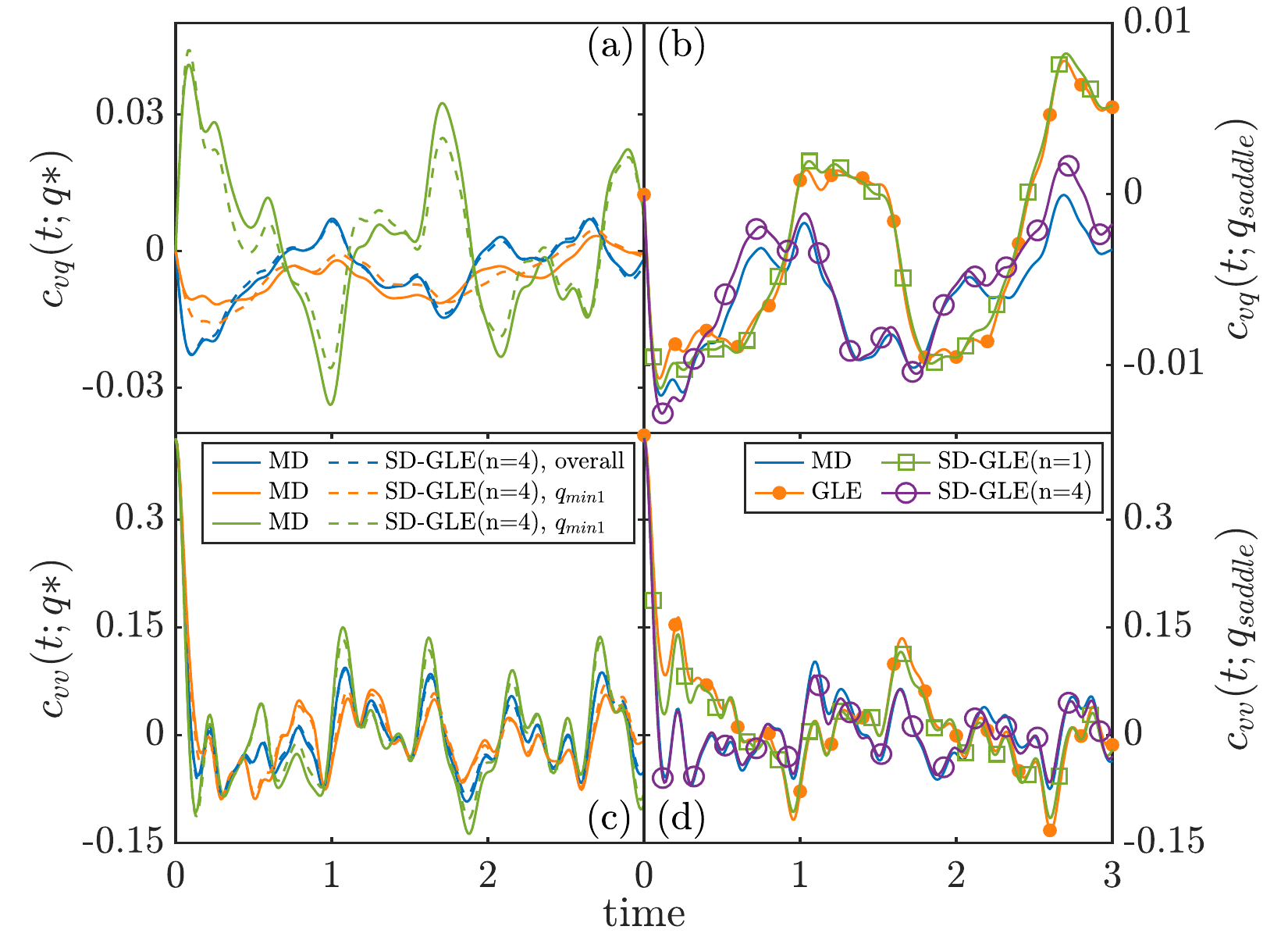}
    \caption{The overall and conditional correlation functions predicted by the full MD and various reduced models for two local minimal and the saddle point: (a-b) $c_{vq}$ and (c-d) $c_{vv}$. }
    \label{fig:c_vq_c_vv_MD_reduced_model}
\end{figure}

With the constructed model, we examine the conditional correlations $c_{vq}(t; \mb q^{\ast})$ and $c_{vv}(t; \mb q^{\ast})$. As shown in Fig. \ref{fig:c_vq_c_vv_MD_reduced_model}, for both the local minima and the saddle point, the predictions of the present model using four features show good agreement with the full MD results. In contrast, the 
predictions of the standard GLE show apparent deviations for $\mb q^{\ast}$ as the saddle point. Also, we note that the present model using one feature shows improved short-time predictions but remains insufficient for long-time correlations. This reveals the complex global variation of the memory term,
which can not be simply represented by a single feature as a state-dependent re-scaling of the kernel function;
the non-Markovian coupling among multiple features is crucial to capture the heterogeneous energy dissipation over the full space.

Finally, we examine the collective behavior related to molecule kinetics. Fig. \ref{fig:c_qq_transition}(a) shows the position correlation $c_{qq}(t)$ characterizing the relaxation of the molecule conformation. Compared with the MD results, the standard GLE shows a significant underestimation of the overall relaxation time. This discrepancy is possibly due to the larger effective friction near the saddle point (see Fig. \ref{fig:phi_theta}(a)), which essentially dampens the transition between the two local minima. The standard GLE overlooked such state dependency and therefore yields a faster relaxation. This limitation is consistently reflected in the distribution of the transition time. As shown in Fig. \ref{fig:c_qq_transition}(b), the standard GLE predicts a larger probability for the short transition time, indicating a smaller overall friction than the local (i.e., saddle point) value.  Fortunately, the heterogeneous non-Markovianity can be faithfully retained in the present model. In particular, the constructed model using a single feature yields a better prediction than the standard GLE. As we increase to four features, the predictions recover the MD results. 

\begin{figure}[htbp]
    \centering
    \includegraphics[width=0.95\linewidth]{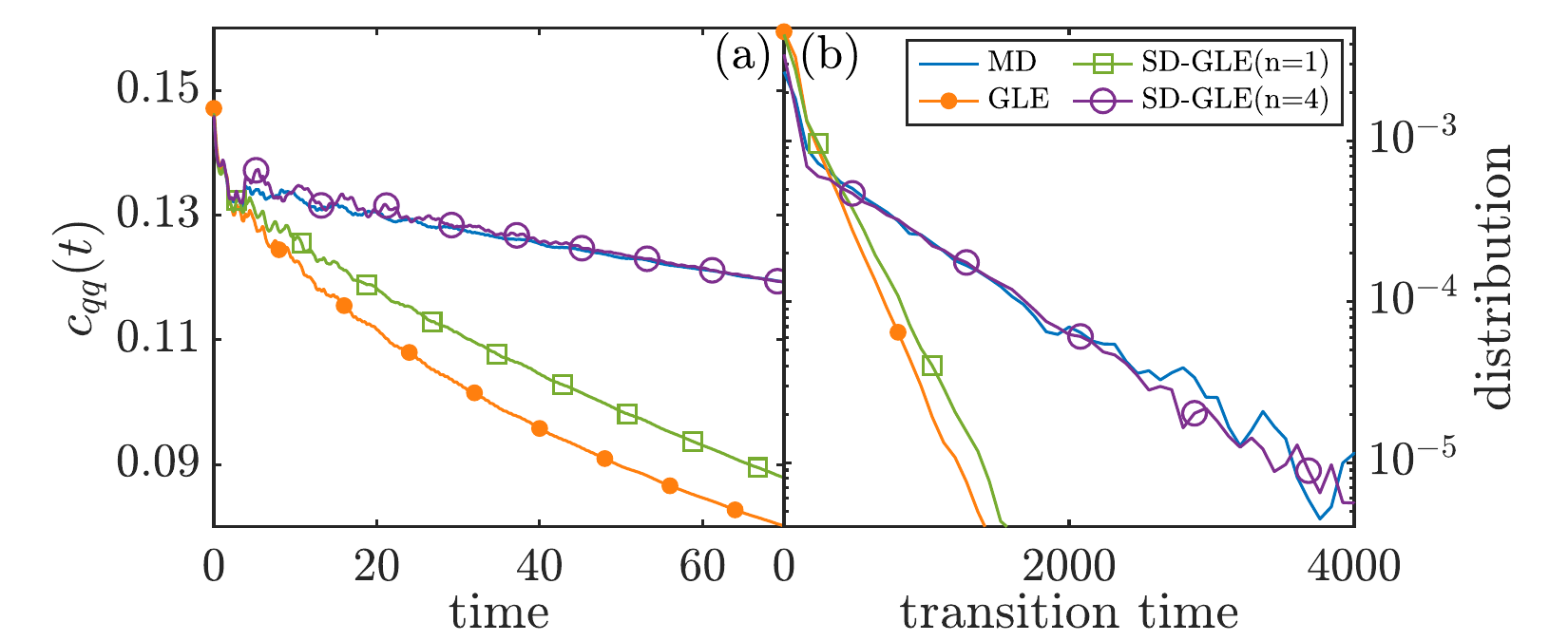}
    \caption{Collective molecule behaviors predicted by the full MD and the various reduced models: (a) overall conformation relaxation and (b) distribution of the transition time between the two local minima.}
    \label{fig:c_qq_transition}
\end{figure}

\section{Summary} 
%
To conclude, we present a data-driven approach for learning a stochastic reduced model beyond the standard GLE, where the complex state-dependent memory can be naturally encoded in the non-Markovian interactions among a set of features in terms of the resolved variables. 
The training does not rely on the explicit knowledge of the full model and only utilizes the trajectory samples, where the three-point correlations can be efficiently pre-computed.  
Numerical results of a molecule system demonstrate the crucial role of the state-dependent non-Markovianity on collective behavior, where the standard GLE shows limitations due to the over-simplified assumption of a homogeneous memory kernel.  In contrast, the present model accurately predicts the molecule kinetics including the transition time distribution, and paves the way toward predictive modeling of the collective functional properties and rare events \cite{E_Vanden_ARPC_2010} of complex multi-scale biomolecule and material systems.

\begin{acknowledgements}
We acknowledge helpful discussions from Zhaopeng Hao and Peiyuan Gao. The work is supported in part by the National Science Foundation under Grant DMS-2110981 and the ACCESS program through allocation MTH210005. 
\end{acknowledgements}

\appendix
\section{Mass matrix of the reduced model}
In this study, we focus on the effect of the state-dependent non-Markovian memory on the collective behavior of complex systems. We choose the coarse-grained resolved variables such that the corresponding mass matrix is a constant. On the other hand, the mass matrix should further depend on the resolved variables for the general cases, and we refer to Refs. \cite{ayaz2022generalized, Lee2019} for further discussions and the reduced dynamics with position-dependent mass. 
Specifically, we define $q = \Vert \mb Q_1 - \mb Q_2\Vert$, where $\mb Q_1$ and $\mb Q_2$ are the atom coordinates of the full model (see Fig. \ref{fig:mol_blb} and Sec. \ref{sec_SI:MD} for details). Accordingly, we have $\dot{q} = \mb Q_{12}^T \dot{\mb Q}_{12}/q$ and its covariance follows 
\begin{equation}
\begin{split}
\left\langle \dot{q} \dot{q} \right\rangle  &= \left\langle \frac{1}{q^2} \mb Q_{12}^T \dot{\mb Q}_{12} \dot{\mb Q}_{12}^T \mb Q_{12} \right\rangle  \\
&= \left\langle \frac{1}{q^2} {\rm Tr}\left[ (\mb Q_{12} \mb Q_{12}^T) (\dot{\mb Q}_{12} \dot{\mb Q}_{12}^T)\right] \right\rangle \\
&= \left\langle \frac{1}{q^2}  {\rm Tr}(\mb Q_{12} \mb Q_{12}^T) \right\rangle \left(M_1^{-1} + M_2^{-1}\right)k_BT  \\
&= \left(M_1^{-1} + M_2^{-1}\right)k_BT,
\end{split}
\end{equation}
where $M_1$ and $M_2$ represent the mass of two atoms and we have used the fact that the distribution of $\mb Q_{12}$ and $\dot{\mb Q}_{12}$ are independent. Therefore, the mass matrix of $q$ is a constant $M \equiv M_1M_2/(M_1+M_2)$.

\section{Coherent noise and invariant density of the reduced model}
We construct the reduced model
\begin{equation}
\begin{split}
\dot{\mb q} &= \mb M^{-1} \mb p, \\
\dot{\mb p} &= -\nabla U(\mb q) - \int_0^t \phi_t^T \Theta(t-\tau) \phi_\tau \mb v(\tau) \diff \tau +  \bm{\mathcal{R}}_t,
\end{split}
\label{eq_SI:MZ_ansatz}    
\end{equation}
where $\phi_t = \left[\phi_1(\mb q_t), \cdots, \phi_n(\mb q_t)\right]$ and $\Theta(t)$ represent the spatial features and the kernel to be learned. In particular, $\Theta(t)$ is directly constructed in the Fourier space, i.e., $\Theta(t) = {\rm e}^{-\alpha t} \sum_{k=0}^{N_\omega} \hat{\Theta}_k \cos(\omega_k t)$, where $\omega_k = \frac{2\pi}{T_c}k$ and $T_c$ is the time domain cut-off of the kernel. We note that ${\rm e}^{-\alpha t}$ should not be viewed as the bases to approximate $\Theta(t)$ (e.g., $\left\{{\rm e}^{-\alpha_i t} \cos (\beta_i t), {\rm e}^{-\alpha_i t} \sin (\beta_i t)\right\}_{i=1}^{N_\alpha}$; see Refs. \cite{Lei_Li_PNAS_2016, Lee2019}). Rather, $\Theta(t)$ is mainly characterized by the Fourier series expansion on $[0, T]$, and the exponential term ${\rm e}^{-\alpha t}$ is essentially a regularization term to eliminate the periodicity while maintaining the semi-positive definiteness condition.

For the fluctuation term $\bm{\mathcal{R}}_t$, we represent it as a noise in the form of 
$\bm{\mathcal{R}}_t = \phi_t^T \widetilde{\mb R}(t)$, where $\widetilde{\mb R}(t)$ is a Gaussian random process whose covariance function determined by $\Theta(t)$, i.e., $\langle \widetilde{\mb R}(t) \widetilde{\mb R}(\tau)^T\rangle = k_BT{\rm e}^{-\alpha (t-\tau)}\Theta(t-\tau)$. This choice avoids dealing with the orthogonal dynamics to calculate the fluctuation term. Furthermore, we can show that this choice enables the reduced model to retain a consistent invariant density function. 

\begin{proposition}
\label{prop:markovian_form}
For reduced model \eqref{eq_SI:MZ_ansatz} with $\Theta(t) = {\rm e}^{-\alpha t} \sum_{k=0}^{N_\omega} \hat{\Theta}_k \cos(\omega_k t)$, by choosing the fluctuation term $\bm{\mathcal{R}}_t = \phi_t^T \widetilde{\mb R}(t)$, where $\widetilde{\mb R}(t)$ is a Gaussian random process satisfying 
\begin{equation}
\langle \widetilde{\mb R}(t) \widetilde{\mb R}(\tau)^T\rangle = k_BT{\rm e}^{-\alpha (t-\tau)}\Theta(t-\tau),     
\end{equation}
the reduced model has an invariant distribution 
\begin{equation}
\rho_{\rm eq}(\mb q, \mb p) \propto \exp\left\{-\left[U(\mb q) + \mb p^T \mb M^{-1}\mb p/2\right]/k_BT\right\}.     
\end{equation}
\end{proposition}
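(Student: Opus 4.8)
The plan is to exhibit \eqref{eq_SI:MZ_ansatz} as the exact marginal of an extended \emph{Markovian} system and then read off invariance of $\rho_{\rm eq}$ from stationarity of the enlarged Gibbs density under the extended Fokker--Planck operator. For each Fourier mode $k=0,\dots,N_\omega$ I would introduce an auxiliary Gaussian variable $\mb z_k\in\mathbb R^{2n}$, split into blocks $\mb z_k=(\mb z_k^{(1)},\mb z_k^{(2)})$, with an oscillator-type linear drift $-\mb A_k\mb z_k$, where $\mb A_k$ has $\alpha\mb I_n$ on its two diagonal blocks and $\pm\omega_k\mb I_n$ on the off-diagonal blocks, so that $\mathrm{e}^{-t\mb A_k}$ has $(1,1)$-block $\mathrm{e}^{-\alpha t}\cos(\omega_k t)\mb I_n$. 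These variables are coupled to $(\mb q,\mb p)$ \emph{conservatively}: the force on $\mb p$ is $-\phi_t^{T}\sum_k\Gamma_k\mb z_k^{(1)}$, while $\dot{\mb z}_k$ receives the reciprocal term $+\Gamma_k^{T}\phi_t\mb M^{-1}\mb p$; the stochastic forcing on $\mb z_k$ is white noise with covariance $k_BT(\mb A_k+\mb A_k^{T})=2\alpha k_BT\,\mb I_{2n}$, i.e.\ the Ornstein--Uhlenbeck fluctuation--dissipation relation at temperature $k_BT$. The factorization $\hat\Theta_k=\Gamma_k\Gamma_k^{T}$ is exactly what permits a single coupling matrix per mode, and $\alpha>0$ is what endows the zero-frequency ($k=0$) block with a genuine stationary law.

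Next I would verify that $\rho_{\rm ext}(\mb q,\mb p,\{\mb z_k\})\propto\exp\!\big\{-\beta\big[U(\mb q)+\tfrac12\mb p^{T}\mb M^{-1}\mb p+\tfrac12\sum_k\|\mb z_k\|^2\big]\big\}$ is annihilated by the adjoint generator $\mathcal L^{*}$ of the extended system. Writing $\mathcal L^{*}$ as the sum of the deterministic Hamiltonian part in $(\mb q,\mb p)$, the $\mb p$--$\mb z$ coupling, and the drift--diffusion part in $\mb z$, one checks term by term: the Hamiltonian part kills any function of the total energy; the two coupling contributions cancel pairwise precisely because the coupling is skew and $\phi(\mb q)$ multiplies $\mb z$ in the $\mb p$-equation and $\mb M^{-1}\mb p$ in the $\mb z$-equation through the matched matrices $\Gamma_k^{T}$ and $\Gamma_k$ (it is important here that $\phi$ depends on $\mb q$ only, hence is not hit by $\partial_{\mb p}$ or $\partial_{\mb z}$, and that $\rho_{\rm ext}$ does not depend on $\mb q$ through $\phi$); and the drift--diffusion part in $\mb z$ leaves $\prod_k\exp(-\beta\|\mb z_k\|^2/2)$ invariant exactly by the noise-covariance choice, with the $\Tr(\mb A_k)$ term from the drift divergence canceling the contribution of $\Delta_{\mb z_k}\rho_{\rm ext}$. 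Integrating $\rho_{\rm ext}$ over the $\mb z_k$ then yields $\rho_{\rm eq}(\mb q,\mb p)$.

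Finally I would close the loop by eliminating the auxiliary variables: solving the linear $\mb z_k$-equations and substituting into $\dot{\mb p}$ produces the convolution $-\int_0^t\phi_t^{T}\big(\sum_k\Gamma_k\Gamma_k^{T}\,\mathrm{e}^{-\alpha(t-\tau)}\cos(\omega_k(t-\tau))\big)\phi_\tau\mb v_\tau\,\diff\tau=-\int_0^t\phi_t^{T}\Theta(t-\tau)\phi_\tau\mb v_\tau\,\diff\tau$ with $\Theta(t)=\mathrm{e}^{-\alpha t}\sum_k\hat\Theta_k\cos(\omega_k t)$, together with a fluctuation term $\phi_t^{T}\widetilde{\mb R}(t)$, $\widetilde{\mb R}(t)=-\sum_k\Gamma_k\mb z_k^{(1)}(t)$, which — once the $\mb z_k$ are started from their stationary Gaussian law so that the $\mathrm{e}^{-t\mb A_k}\mb z_k(0)$ transients are absorbed into a stationary process — is Gaussian with covariance equal to $k_BT$ times the oscillatory kernel governing the $\mb z_k$ autocorrelations, matching the prescribed $\langle\widetilde{\mb R}(t)\widetilde{\mb R}(\tau)^{T}\rangle$ up to the $\mathrm{e}^{-\alpha(\cdot)}$ regularization factor. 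The step I expect to be the main obstacle is precisely this last one: choosing the auxiliary drift, the placement of the $\Gamma_k$'s, and the fluctuation--dissipation normalization so that a \emph{single} linear auxiliary system simultaneously reproduces the state-dependent memory $\phi_t^{T}\Theta(t-\tau)\phi_\tau$ and the noise covariance with the correct $\mathrm{e}^{-\alpha(\cdot)}$ dependence, while also keeping careful track, in the $\mathcal L^{*}\rho_{\rm ext}=0$ computation, that the $\mb q$-dependence carried by $\phi$ generates no spurious terms; once the embedding is pinned down, the Fokker--Planck verification and the marginalization are routine.
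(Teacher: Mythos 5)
Your proposal is correct and takes essentially the same route as the paper's proof: the appendix likewise embeds each Fourier mode via a cosine/sine auxiliary pair $\mathbf z_{k,1},\mathbf z_{k,2}$ with damped-rotation drift, skew coupling $\pm\,\phi(\mathbf q)^T\Gamma_k$, and white-noise forcing with $\Lambda_k\Lambda_k^T=-k_BT(\mathbf J+\mathbf J^T)=2\alpha k_BT\,\mathbf I$, then reads off invariance of the extended Gibbs density from the structure $\dot{\mathbf X}=\mathbf K\nabla F+\Lambda\dot{\mathbf W}_t$ with $\Lambda\Lambda^T=-k_BT(\mathbf K+\mathbf K^T)$ and marginalizes over the auxiliaries, just as you do. The only differences are presentational (the paper defines the auxiliaries as functionals of the GLE trajectory and then derives the Markovian SDE, while you build the extended SDE first and eliminate), and the extra ${\rm e}^{-\alpha(\cdot)}$ factor you flag is a notational inconsistency already present in the paper's statement rather than a gap in your argument.
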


\begin{proof}
Let us introduce auxiliary variables
\begin{equation}
\begin{split}
\mb z_{k,1} &= -\int_0^t {\rm e}^{-\alpha (t-\tau)} \Gamma_k \cos(\omega_k (t-\tau)) \phi_\tau \mb v(\tau) \diff \tau + \mb R_{k,1}(t),\\
\mb z_{k,2} &=-\int_0^t {\rm e}^{-\alpha (t-\tau)} \Gamma_k  \sin(\omega_k (t-\tau)) \phi_\tau \mb v(\tau) \diff \tau + \mb R_{k,2}(t),
\end{split}
\label{eq_SI:z_12}
\end{equation}
where $\Gamma_k^T \Gamma_k = \hat{\Theta}_k$  and $\mb R_{k,1}(t)$ is a Gaussian random process satisfying 
\begin{equation}
\left\langle \mb R_{j,1}(t) \mb R_{k,1}(\tau)^T \right\rangle = k_BT \delta_{jk} {\rm e}^{-\alpha (t-\tau)} \cos(\omega_k (t-\tau)),
\label{eq_SI:covariance_R_1}
\end{equation}
where $\delta_{jk}$ is the Kronecker delta. Accordingly, the second equation of Eq. \eqref{eq_SI:MZ_ansatz} can be written as
\begin{equation}
\dot{\mb p} = -\nabla U(\mb q) + \phi(\mb q)^T \sum_{k} \Gamma_k^T \mb z_{k,1},
\label{eq_SI:p_dynamics}
\end{equation}
and $\mb R_{j,2}(t)$ will be specified later.

Let $\mb z_k = \left[\mb z_{k,1}, \mb z_{k,2}\right]$ and  $\mb R_k = \left[\mb R_{k,1}, \mb R_{k,2}\right]$, we can rewrite Eq. \eqref{eq_SI:z_12} by
\begin{equation}
\begin{split}
\mb z_k &=   -\int_0^{t} {\rm e}^{-\alpha (t-\tau)} \begin{pmatrix}
\cos(\omega_k (t-\tau))I &    \sin(\omega_k (t-\tau))I \\
-\sin(\omega_k (t-\tau))I & \cos(\omega_k (t-\tau))I
\end{pmatrix} \begin{pmatrix}
\Gamma_k \phi_{\tau}\mb v(\tau) \\
0
\end{pmatrix} \diff \tau + \mb R_k(t) \\
&= -\int_0^{t} \exp \left[\begin{pmatrix}
-\alpha  I & \omega_k  I \\
-\omega_k  I & -\alpha I
\end{pmatrix}
(t - \tau)\right] \begin{pmatrix}
\Gamma_k \phi_{\tau}\mb v(\tau) \\
0
\end{pmatrix} \diff \tau + \mb R_k(t). 
\end{split}
\label{eq_SI:z_integral}
\end{equation}
By taking the time derivative of Eq. \eqref{eq_SI:z_integral} with respect to $t$, we have 
\begin{equation}
\frac{\diff \mb z_k}{\diff t} = \underbrace{\begin{pmatrix}
-\alpha I & \omega_k  I \\
-\omega_k  I & -\alpha I
\end{pmatrix}}_{\triangleq \mb J} \mb z_k - \begin{pmatrix}
\Gamma_k \phi_{\tau}\mb v(t) \\
0
\end{pmatrix}  + \frac{\diff \mb R_k}{\diff t} - \mb J \mb R_k(t).
\label{eq_SI:z_diff}
\end{equation}
Furthermore, we note that $\mb R_{k}(t)$ can be modeled as a generalized Ornstein–Uhlenbeck process and $\frac{\diff \mb R_k}{\diff t} - \mb J \mb R_k(t)$ can be represented by
\begin{equation}
\frac{\diff \mb R_k}{\diff t} - \mb J \mb R_k(t) = \Lambda_k \dot{\mb W}_{k,t},
\label{eq_SI:OU_process}
\end{equation}
where $\dot{\mb W}_{k,t}$ is the standard white noise and $\Lambda_k\Lambda_k^T = -k_BT (\mb J + \mb J^T)$. With this choice, the covariance of $\mb R_k(t) =  \left[\mb R_{k,1}, \mb R_{k,2}\right]$ is given by
\begin{equation*}
\left\langle \mb R_k(t) \mb R_k(\tau)^T\right\rangle = k_BT {\rm e}^{-\alpha (t-\tau)}  \begin{pmatrix}
\cos(\omega_k (t-\tau))I &    \sin(\omega_k (t-\tau))I \\
-\sin(\omega_k (t-\tau))I & \cos(\omega_k (t-\tau))I
\end{pmatrix}     
\end{equation*}
such that Eq. \eqref{eq_SI:covariance_R_1} remains valid.
Using Eqs. \eqref{eq_SI:z_12}\eqref{eq_SI:p_dynamics}\eqref{eq_SI:z_diff}, we can write the reduced model \eqref{eq_SI:MZ_ansatz} in the form of
\begin{equation}
\begin{split}
\frac{\diff }{\diff t}\begin{pmatrix} \mb q \\ \mb p \\ \cdots \\ \mb z_{k,1} \\ \mb z_{k,2} \\ \cdots \end{pmatrix} 
&= \begin{pmatrix}
0 & I & \cdots &0 &0 &\cdots \\
-I & 0                & \cdots & \phi(\mb q)^T \Gamma_k^T & 0      & \cdots \\ 
0 & \cdots           & \cdots & \cdots                & \cdots & \cdots \\ 
0  &-\Gamma_k \phi(\mb q) & \cdots & -\alpha I                & -\omega_k I & \cdots \\ 
0 &0                & \cdots & \omega_k I                 & -\alpha I & \cdots \\
0& \cdots           & \cdots & \cdots                & \cdots & \cdots \\
\end{pmatrix} 
\begin{pmatrix} \nabla U(\mb q) \\ \mb v \\ \cdots \\ \mb z_{k,1} \\ \mb z_{k,2} \\ \cdots \end{pmatrix} +
 \begin{pmatrix} 0 \\ 0 \\ \cdots \\ \Lambda_{k} \dot{\mb W}_{k,t} \\  \cdots \end{pmatrix} \\
 &\triangleq \mb K \nabla F(\mb q, \mb p, \cdots, \mb z_{k,1}, \mb z_{k,2}, \cdots) + \Lambda \dot{\mb W}_t,\\
\end{split}
\label{eq_SI:reduced_model_Markovian}
\end{equation}
where $\mb K$ is the first matrix of the right-hand-side of Eq. \eqref{eq_SI:reduced_model_Markovian}, $F(\mb q, \mb p, \cdots, \mb z_{k,1}, \mb z_{k,2}, \cdots) = U(\mb q) + \frac{1}{2}\mb p^T \mb M^{-1}\mb p +  \frac{1}{2} \sum_{k=1}^{N_\omega} \left(\mb z_{k,1}^T\mb z_{k,1} + \mb z_{k,2}^T\mb z_{k,2}\right)$ is the total free energy of the extended system, and $\Lambda = {\rm diag}(0, 0, \cdots, \Lambda_k, \cdots)$. 
Using \eqref{eq_SI:OU_process}, it is easy to show $\Lambda \Lambda^T = -k_BT (\mb K+\mb K^T)$. Therefore, the gradient system \eqref{eq_SI:reduced_model_Markovian} (i.e., the reduced model \eqref{eq_SI:MZ_ansatz}) has the invariant density function
\begin{equation*}
\rho_{\rm eq} (\mb q, \mb p, \mb z) = \exp\left[-F(\mb q, \mb p, \mb z)/k_BT\right].    
\end{equation*}
\end{proof}

\section{Training details of the reduced model}
We represent the state encoders $\phi(\mb q) = \left[\phi_1(\mb q), \cdots, \phi_n(\mb q)\right]$ and $\Theta(t)$ in reduced model \eqref{eq_SI:MZ_ansatz} by 
\begin{equation}
\begin{split}
\phi_i(\mb q) &=  \mb H_i^T \psi (\mb q) \\
\Theta(t) &= {\rm e}^{-\alpha t} \sum_{k=0}^{N_\omega} \hat{\Theta}_k \cos(\omega_k t),
\end{split}
\label{eq:phi_theta}
\end{equation}
where $\psi(\mb q) = \left[\psi_1(\mb q), \cdots, \psi_{N_b}(\mb q)\right]$ is a set of sparse bases, $\mb H = \left[\mb H_1^T, \cdots, \mb H_n^T\right]$ are trainable coefficients. 
$\omega_k = 2\pi k/T_c$ is the frequency, 
where $T_c$ is the time domain cut-off of the kernel and $\delta t$ is the step size of the discrete samples. \textcolor{black}{For the present study, $\psi$ is chosen as the uniform piecewise linear basis function defined on $[2.8, 4.1]$ with $N_b = 66$, $T_c=200$ and $N_\omega = 2000$.}

To train the reduced model, we use the correlation functions conditional with differential initial states such that the state-dependent nature can be manifested. Specifically, we right-multiply the second equation of Eq. \eqref{eq_SI:MZ_ansatz} by $\mb v(0)$ and take the conditional expectation on $\mb q_0 = \mb q^{\ast}$, which defines $\mb g(t; \mb q^{\ast}) = \left\langle [\dot{\mb p}_t + \nabla U(\mb q_t)] \mb v_0^T\vert \mb q_0 = \mb q^{\ast}\right\rangle$. Accordingly, we define the empirical loss function
\begin{equation}
\begin{split}
L &= \sum_{l=1}^{N_q}\sum_{k=1}^{N_t} \left\Vert  \widetilde{\mb g} (t_k; \mb q^{(l)}) - \mb g(t_k; \mb q^{(l)})  \right\Vert^2, \\
\widetilde{\mb g}(t_k; \mb q^{(l)}) &= \sum_{j=1}^k {\rm Tr} \left[\Theta(t_k-t_j) \mb H \mb C_{\psi,\psi}(t_k, t_j; \mb q^{(l)}) \mb H^T \right] \delta t,
\end{split}
\nonumber
\end{equation}
where $\widetilde{\mb g}(\cdot)$ represents the prediction by the reduced model; $l$ and $k$ correspond to various initial states and discrete time, respectively.  
$\mb C_{\psi,\psi}(t, \tau; \mb q^{\ast}) = \left\langle \psi_\tau \mb v_\tau \mb v_0^T \psi_t^T \vert \mb q_0 = \mb q^{\ast}\right\rangle$
is a three-point correlation characterizing the coupling among the bases. 

Besides the conditional correlation functions, we can also introduce the loss function with respect to the overall correlation function, i.e., 
\begin{equation}
\begin{split}
L_2 &= \sum_{k=1}^{N_t} \left\Vert  \widetilde{\mb g}_2 (t_k) - \mb g_2(t_k)  \right\Vert^2 \\
\widetilde{\mb g}_2(t_k) &= \sum_{j=1}^k {\rm Tr} \left[\Theta(t_k-t_j) \mb H \overline{\mb C}_{\psi,\psi}(t_k, t_j) \mb H^T \right] \delta t,
\end{split}
\nonumber
\end{equation}
where  $\mb g_2(t) = \left\langle [\dot{\mb p}_t + \nabla U(\mb q_t)] \mb v_0^T \right\rangle$ is the overall correlation and $\overline{\mb C}_{\psi,\psi}(t, \tau) =  \left\langle \psi_\tau \mb v_\tau \mb v_0^T \psi_t^T \right\rangle$. In particular, if there is scale separation between $c_{vv}(t)$ and $c_{qq}(t)$ (e.g., $c_{vv}(t)$ decays much faster than $c_{qq}(t)$; see Fig. 2 and Fig. 4), we may approximate $\overline{\mb C}_{\psi,\psi}(t, \tau)$ by two-point correlations, i.e., $\overline{\mb C}_{\psi,\psi}(t, \tau) \approx \left\langle \psi_\tau \otimes \psi_t^T \right\rangle : \left\langle \mb v_\tau \mb v_0^T \right\rangle$. 

Efficient training is achieved by using the following numerical methods to evaluate $\widetilde{\mb g}$ and $\widetilde{\mb g}_2$. 
Specifically, $\psi_\tau \psi_t^T$ can be efficiently pre-computed with $O(1)$ complexity by using the sparse piecewise linear basis functions. Furthermore, we can use the low-rank representation (e.g., based on the singular value decomposition) of $\mb C_{\psi,\psi}$ and $\overline{\mb C}_{\psi,\psi}$ to accelerate the matrix production $\mb H \mb C_{\psi,\psi} \mb H^T$. In addition, the convolution on index $j$ can be efficiently evaluated by the Fast Fourier Transform algorithm \cite{Coole_Tukey_FFT_1965}.

While $L_2$ alone is insufficient to characterize the emergence of the state-dependent memory, it serves as a necessary condition and can facilitate the learning of the reduced model.  In practice, we can use both loss functions to train the reduced model with $N_q=65$, $N_t=300$ for $L$, and $N_t=30000$ for $L_2$. Specifically, the training is conducted by the Adam \cite{Kingma_Ba_Adam_2015} optimization method in three stages with 2000, 6000, and 6000 steps respectively. For the first stage, we only use $L_2$ to train the model with a constant learning rate of $0.04$. For each step of the following two stages, 16 initial states (i.e., $q^{(l)}$) are randomly selected as one training batch to evaluate the total loss $L_t = L + L_2$. For both stages, the initial learning rate is $1\times 10^{-2}$ and the exponential decay rate is $0.9$ per 150 steps.

\section{Simulation of the reduced model}
To simulate the reduced model \eqref{eq_SI:MZ_ansatz}, we follow Prop. \ref{prop:markovian_form} and generate $\widetilde{\mb R}(t)$ on $[0, T]$ similar to Refs. \cite{berkowitz1983generalized, ogorodnikov1996numerical} by   
\begin{equation}
\widetilde{\mb R}(t) = \beta^{-1/2} \sum_{k=0}^{2N} \widetilde{\Theta}_k^{1/2} \left[\cos(\omega_k t) \xi_k + \sin(\omega_k t) \eta_k\right],
\label{eq_SI:random_noise}
\end{equation}
where $\beta^{-1} = k_BT$, $\widetilde{\Theta}_k$ are the Fourier (essentially cosine) modes of $\Theta(|t|)$ on $\left[-T, T\right]$, $\xi_k$ and $\eta_k$ are independent Gaussian random vectors, and $N$ is the total number of simulation step.  

Specifically, for large simulation time $T$, the Fourier modes $\widetilde{\Theta}_k$ is given by 
\begin{equation*}
\begin{split}
\widetilde{\Theta}_k 
&=  \sum_{j=1}^{N_\omega} \int_{-T}^T  {\rm e}^{-\alpha \vert t\vert} \hat{\Theta}_j \cos(\omega_j t) \cos(\omega_k t) \diff t \\
&= \sum_{j=1}^{N_\omega} \int_{0}^T  {\rm e}^{-\alpha  t} \hat{\Theta}_j \left( \cos\left((\omega_j - \omega_k)t\right) +  \cos\left(\omega_j + \omega_k)t\right) \right) \\
&\approx \sum_{j=1}^{N_\omega}\left(\frac{\alpha \hat{\Theta}_j}{\alpha^2 + (\omega_j - \omega_k)^2} + \frac{\alpha \hat{\Theta}_j}{\alpha^2 + (\omega_j + \omega_k)^2}\right).
\end{split}
\end{equation*}
Therefore $\widetilde{\mb R}(t)$ can be generated using the Fast Fourier Transform algorithm \cite{Coole_Tukey_FFT_1965} using $O(N\log N)$ complexity. Also, the convolution term $\int_0^t \phi_t^T \Theta(t-\tau) \phi_{\tau} \mb v(\tau) \diff \tau$ in Eq. \eqref{eq_SI:MZ_ansatz} can be computed using the fast convolution method developed in Ref. \cite{Achim_SIAM_2006} with $O(N\log N)$ complexity.

\section{Full atomistic model}
\label{sec_SI:MD}
We consider the full micro-scale model of benzyl bromide (see Fig. \ref{fig:mol_blb} for a sketch of the molecule structure) in an aqueous environment. The general AMBER \cite{Wang_Wolf_Amber_JCC_2004} force field is used for the benzyl bromide molecule and the partial charges of molecule atoms were set by the restrained electrostatic potential (RESP) approach \cite{Bayly_Cieplak_RESP_JPC_1993}. The rigid TIP3P water model \cite{Jorgensen_Chandrasekhar_TIP3P_JCP_1983} is used for the water molecules and the bond lengths and angles were held constant through the SHAKE algorithm \cite{Ryckaert_Ciccotti_SHAKE_JCP_1977, Miyamoto_Kollman_Settle_JCC_2004}. Long-range electrostatic interactions were calculated using a Particle Mesh Ewald summation with a relative error set to be $10^{-4}$. 
The full system consists of one benzyl bromide molecule and 2400 water molecules with the periodic boundary condition imposed along each direction. The isothermal-isobaric thermostat \cite{Martyna_Klein_JCP_1994} is used to equilibrate the system for 16 ns at 298K and 1 bar using a time step of $1$ fs. Following the equilibration, the box size is scaled to be near $41.5 \times 41.5 \times 41.5$ {\AA}$^3$. The simulation was run for a production period of 2.5 $\mu$s in a canonical ensemble with a Nos\'e-Hoover thermostat \cite{Nose_Mol_Phys_1984, Hoover1985}. 

\begin{figure}
    \centering
    \includegraphics[width=0.36\textwidth]{./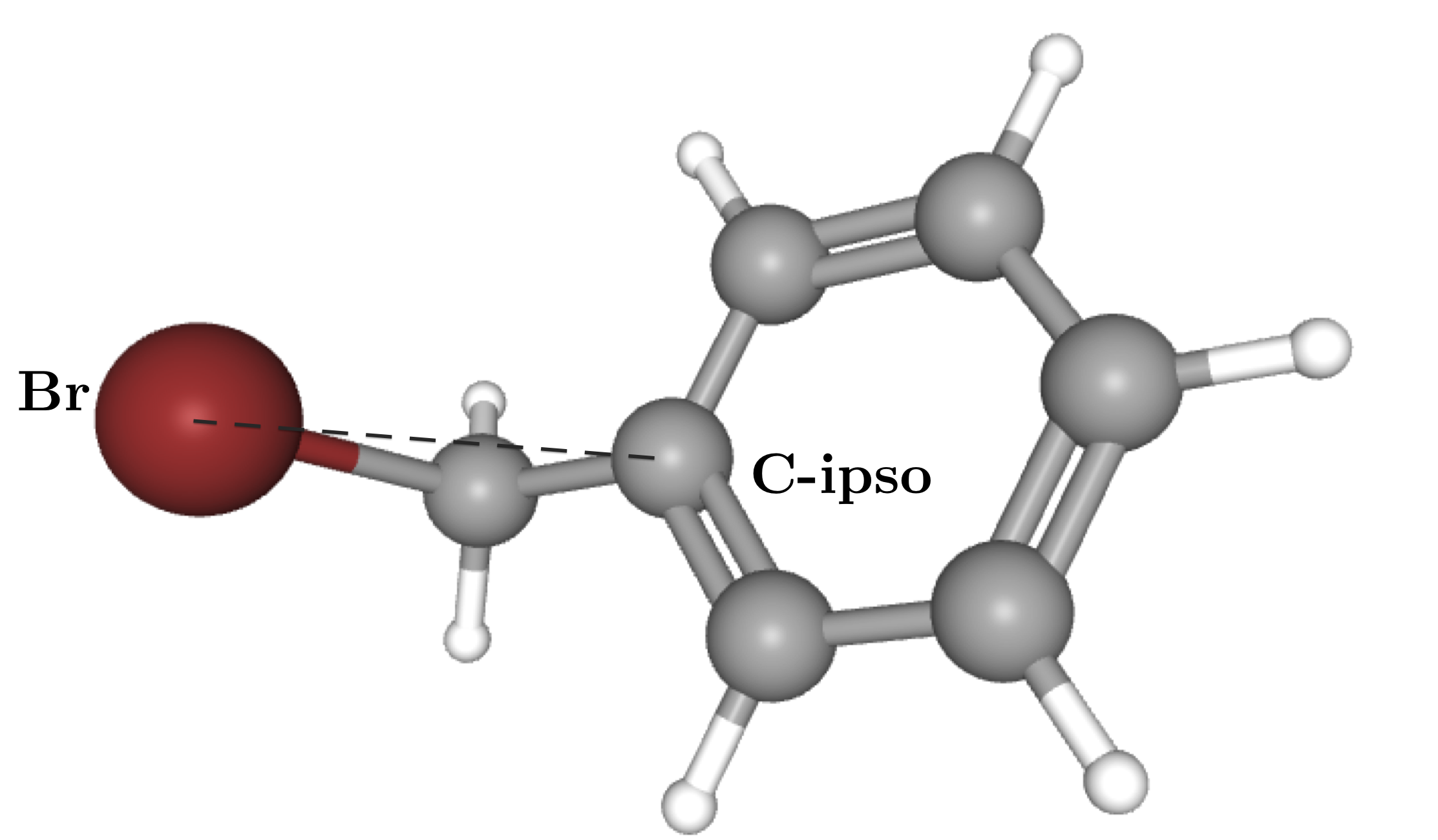}
    \caption{A sketch of the molecule benzyl bromide. The resolved variable is defined as the distance between the bromine atom and the ipso-carbon atom.}
    \label{fig:mol_blb}
\end{figure}

The resolved variable $q$ is defined as the distance between
the bromine atom and the ipso-carbon atom. The free energy is obtained from the probability density function $\rho(q)$ (see the inset plot of Fig. 2(a)), i.e., $U(q) = -k_BT\ln \rho(q)$, where $\rho(q)$ is directly obtained from the full MD samples using the kernel density estimation. To verify the accuracy of the constructed $U(q)$, we calculate the expectation of $q\nabla U(q)$ on the sample. The numerical result gives $0.996 k_BT$ and is close to the theoretical prediction $\left \langle q \nabla U(q)\right\rangle = \int q \nabla U(q) {\rm e}^{-U(q)/k_BT} \diff q \equiv k_BT$.

\section{Additional numerical results}
\subsection{Limitations of the standard GLE near the local minima}
Fig. \ref{fig_SI:cvv_cvq_local_minima} shows the predictions of the conditional correlations $c_{qv}(t, q^{\ast})$ and $c_{vv}(t, q^{\ast})$  for $q^{\ast}$ representing the two local minima. Similar to the results of the saddle point as shown in Fig. 3(b), the predictions of the standard GLE show apparent deviations from the full MD results due to the ignorance of the state-dependent memory nature. In contrast, the predictions of the present model with four features can accurately recover the MD predictions.

\begin{figure}[htbp]
    \centering
    \includegraphics[scale=0.25]{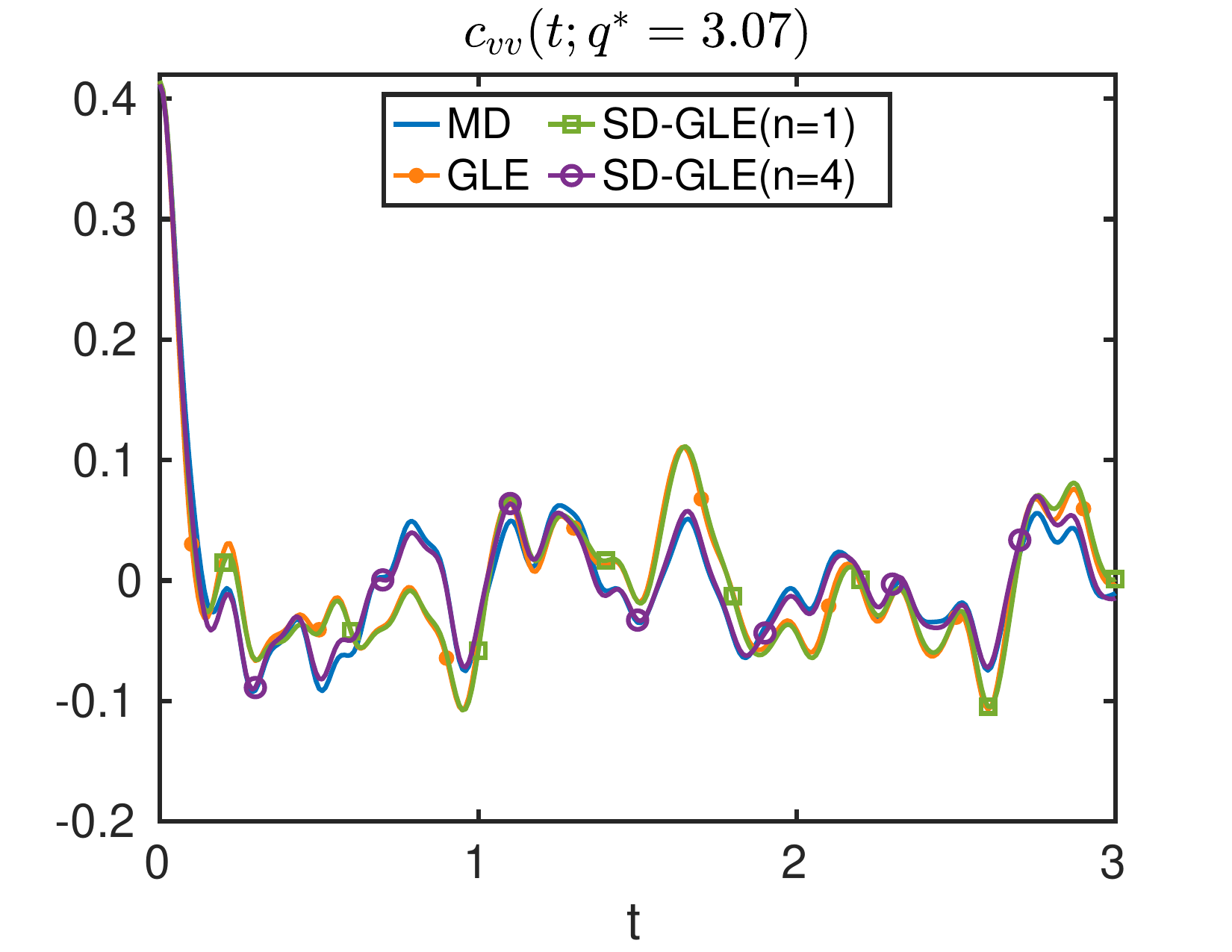}
    \includegraphics[scale=0.25]{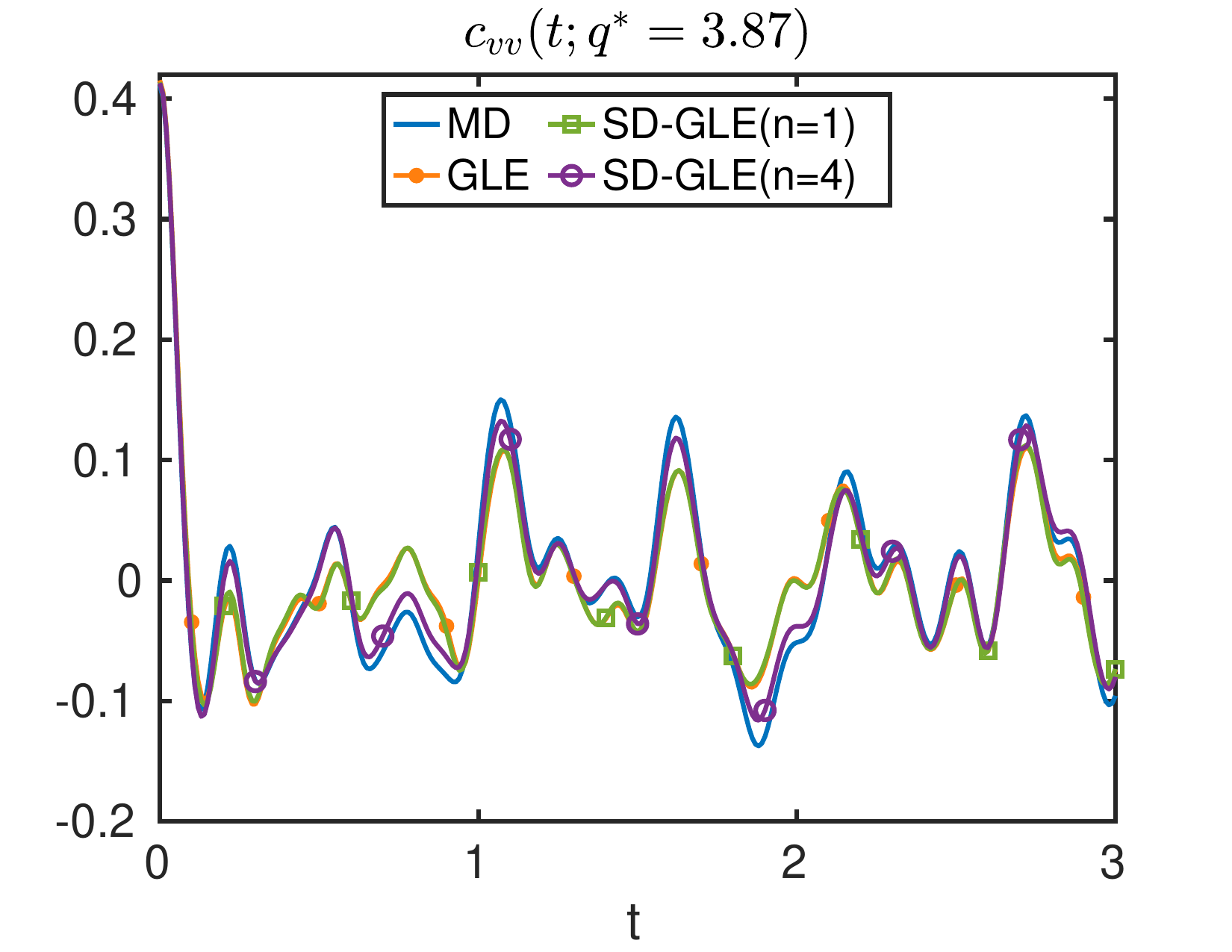}
    \includegraphics[scale=0.25]{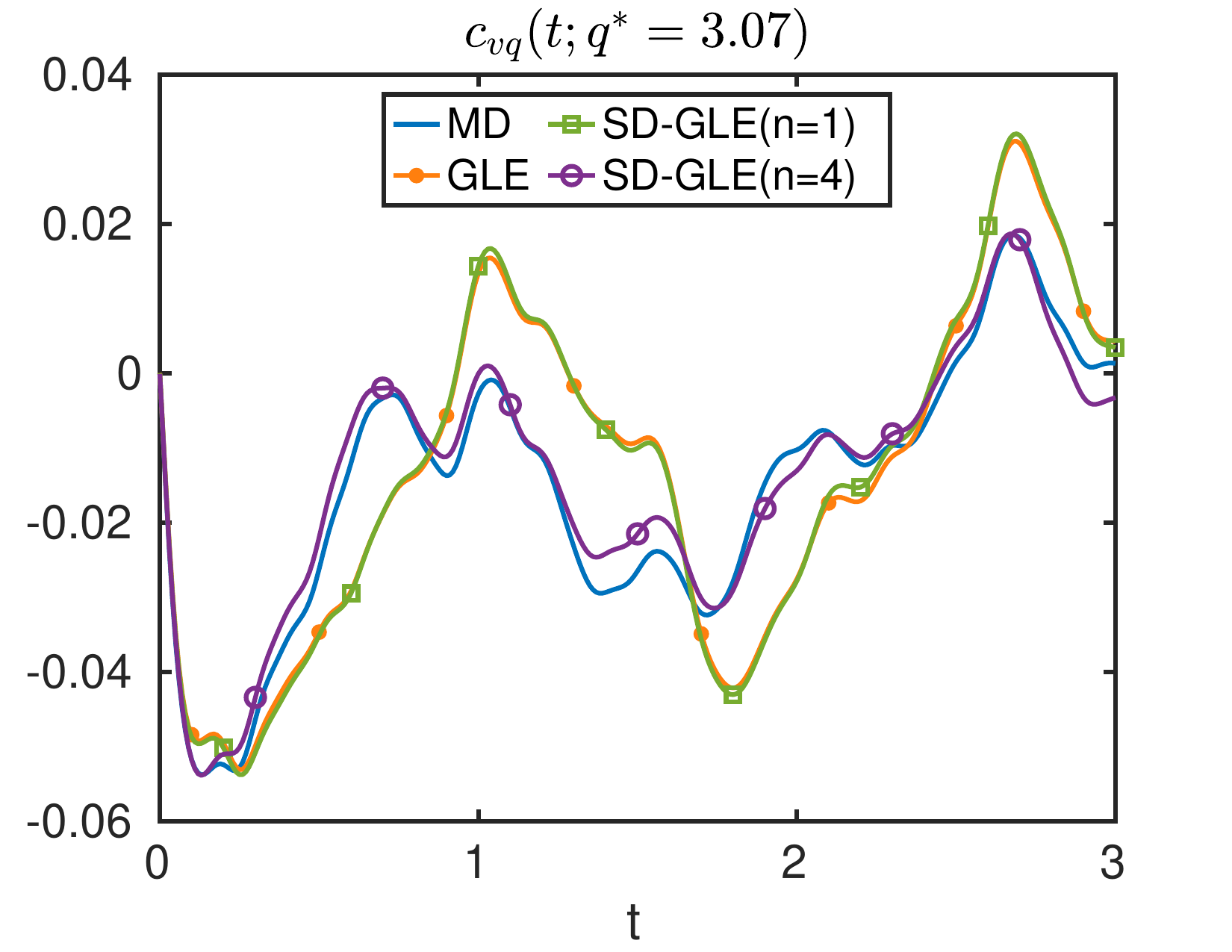}
    \includegraphics[scale=0.25]{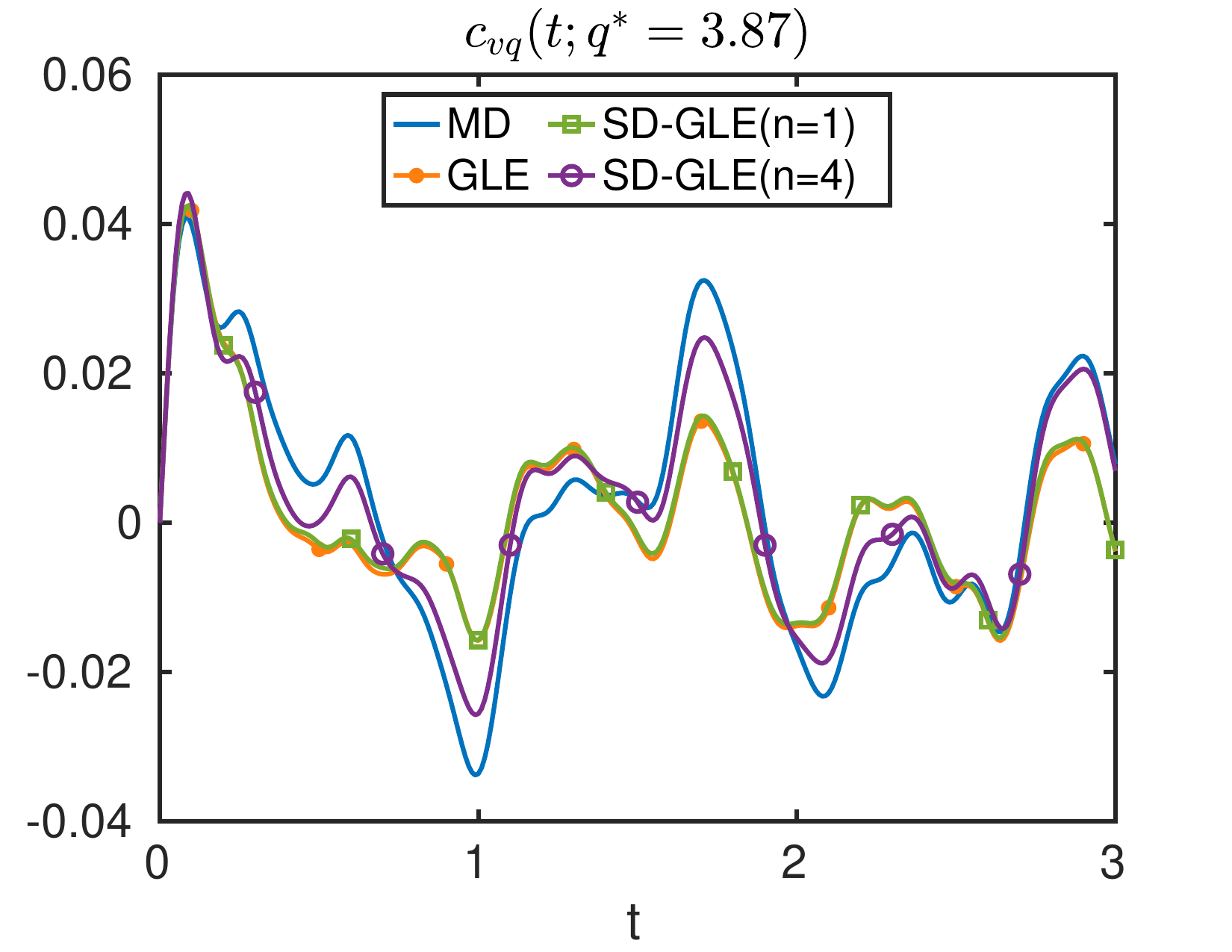}

    \caption{The conditional correlation functions $c_{qv}(t, q^{\ast})$ and $c_{vv}(t, q^{\ast})$ for the two local minima predicted by the full MD, the standard GLE, and the present model (SD-GLE) constructed using one and four spatio-features. Left: $q^{\ast} = 3.07$; Right:  $q^{\ast} = 3.87$. The predictions by the standard GLE show apparent discrepancies with the full MD results. }
    \label{fig_SI:cvv_cvq_local_minima}
\end{figure}

\subsection{Other forms of the reduced model}
For comparison, 
we also consider other forms of the reduced model. In particular, we retain the encoders $\phi$ in Eq. \eqref{eq_SI:MZ_ansatz} but set $\Theta(t)$ to be diagonal, i.e., we ignore the non-Markovian coupling among the different state features. The reduced model is trained using four features. 
Fig. \ref{fig_SI:SD-diag} shows the conditional correlation $c_{vv}(t; q^{\ast})$ obtained from the full MD and different reduced models. 
The prediction of the constructed model (labeled by ``SD-GLE-Diag'') shows apparent deviations from the full MD result with incremental improvement over the stand GLE. The large discrepancy reveals the complex state-dependent nature; the non-Markovian effect can be neither approximated by ansatz like $\gamma(q)\theta(t)$ as a simple generalization/re-scaling of the initial value at $t=0$, nor represented by the coupling with the independent bath variables. Instead, the non-Markovian coupling among the various state-features retained in the present model plays a crucial for accurately modeling the heterogeneous energy dissipation arising from the unresolved intramolecular interactions and reproducing the collective dynamics.  

\begin{figure}
    \centering
    \includegraphics[scale=0.3]{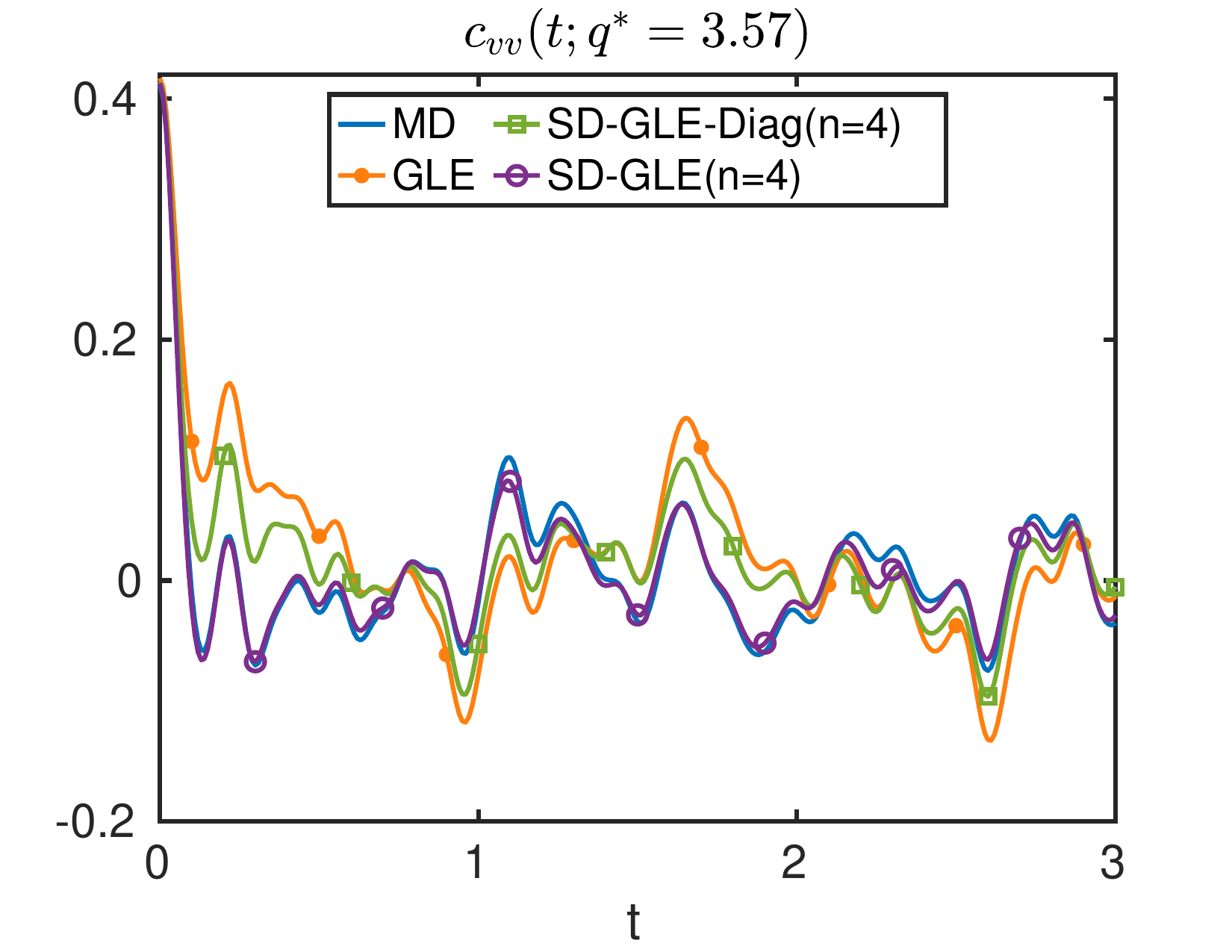}
    \caption{The conditional correlation functions $c_{vv}(t, q^{\ast})$ for the saddle point predicted by the full MD, the standard GLE, the reduced models constructed using four state-features with diagonal $\Theta(t)$ (SD-GLE-Diag) and full $\Theta(t)$ (SD-GLE). The large discrepancy between the SD-GLE-Diag model and the full MD results implies the complexity of the state-dependency of the memory term, which can not be well represented by the coupling of independent bath variables. The non-Markovian interactions among the state-features are essential to capture the heterogeneous energy dissipation process.}
    \label{fig_SI:SD-diag}
\end{figure}

\section{Generalization of the present reduced model formulation}
So far, we have constructed the reduced model \eqref{eq_SI:MZ_ansatz} by assuming the matrix-valued kernel $\Theta(t)$ is symmetry.  In fact, this form can be generalized by introducing an anti-symmetry part, i.e., 
\begin{equation}
\Theta(t) = {\rm e}^{-\alpha t} \sum_{k=0}^{N_\omega} (\Gamma_{k,1}^T\Gamma_{k,1}+ \Gamma_{k,2}^T\Gamma_{k,2}) \cos(\omega_k t)+(\Gamma_{k,1}^T\Gamma_{k,2}- \Gamma_{k,2}^T\Gamma_{k,1}) \sin(\omega_k t),    
\label{eq_SI:Theta_general}
\end{equation}
where $\Gamma_{k,1}$ and $\Gamma_{k,2}$ are lower-triangular matrices representing the Fourier modes of $\Theta(t)$. The form is general non-symmetric except for $t=0$ and satisfies $\Theta(-t) = \Theta(t)^T$.

Similar to the symmetry form, we can model the fluctuation term $\bm{\mathcal{R}}_t$ as a noise in the form of 
$\bm{\mathcal{R}}_t = \phi_t^T \widetilde{\mb R}(t)$, where $\widetilde{\mb R}(t)$ is a Gaussian random process satisfying $\langle \widetilde{\mb R}(t) \widetilde{\mb R}(\tau)^T\rangle = k_BT{\rm e}^{-\alpha (t-\tau)}\Theta(t-\tau)$. Similar to Prop. \ref{prop:markovian_form}, we can show that this choice retains a consistent invariant density function. 
In practice, we can generate the noise term $\widetilde{\mb R}(t)$ on $[0, T]$ by
\begin{equation}
\begin{split}
\widetilde{\mb R}(t) &= \beta^{-1/2} \sum_{k=0}^{2N} \left[ \widetilde{\Theta}_{k,1}^{1/2} \cos(\omega_k t) \xi_k + \sin(\omega_k t) (Q_{1,k}^{1/2} \xi_k+ Q_{2,k}^{1/2}\eta_k )\right], \nonumber\\
Q_{1,k} &= \widetilde{\Theta}_{k,2} \widetilde{\Theta}_{k,1}^{-1} \widetilde{\Theta}_{k,2}^T \quad 
Q_{2,k} = \widetilde{\Theta}_{k,1} - \widetilde{\Theta}_{k,2} \widetilde{\Theta}_{k,1}^{-1} \widetilde{\Theta}_{k,2}^T,
\label{eq_SI:random_noise_general}
\end{split}
\end{equation}
where $\beta^{-1} = k_BT$, $\widetilde{\Theta}_{k,1}$, $\widetilde{\Theta}_{k,2}$ are the Fourier cosine and sine modes on $\left[-T, T\right]$ with $\Theta(-t)=\Theta(t)^T$, $\xi_k$ and $\eta_k$ are independent Gaussian random vectors, and $N$ is the total number of simulation step. 
Here $\widetilde{\mb R}(t)$ can be also generated using the Fast Fourier Transform algorithm \cite{Coole_Tukey_FFT_1965} using $O(N\log N)$ complexity. 
We will investigate this generalized formulation for model reduction in future studies.

%

\end{document}